\newtheorem{proposition}[thm]{Proposition}
\newtheorem{remark}{Remark}
\newcommand{\hlf}{{\scriptscriptstyle 1\hspace*{-1pt}/\hspace*{-1pt}2}}
\newcommand{\cS}{{\mathcal{S}}}
\newcommand{\cF}{{\mathcal{F}}}
\newcommand{\interior}[1]{{\kern0pt#1}^{\mathrm{o}}}
\newcommand{\mR}{{\mathbb R}}
\newcommand{\ignore}[1]{}
\newcommand{\ud}{d} 
\newcommand{\W}{{W_2}}
\newcommand{\dbar}{d\hspace*{-0.08em}\bar{}\hspace*{0.1em}}
\newcommand{\lr}[2]{\langle#1,#2\rangle}
\newcounter{rmnum}
\definecolor{gray}{rgb}{0.5,0.5,0.5}
\definecolor{lgrey}{rgb}{0.9,.7,0.7}
\def\spacingset#1{\def\baselinestretch{#1}\small\normalsize}
\begin{document}
	\begin{frontmatter}

\title{Stochastic thermodynamic engines \\under time-varying  temperature profile}
\author[UCI]{Rui Fu}\ead{rfu2@uci.edu},
\author[UCI]{Olga Movilla Miangolarra}\ead{omovilla@uci.edu},
\author[UW]{Amirhossein Taghvaei}\ead{amirtag@uw.edu},
\author[GAT]{Yongxin Chen}\ead{yongchen@gatech.edu},
\author[UCI]{Tryphon T.\ Georgiou}\ead{tryphon@uci.edu}

\address[UCI]{Department of Mechanical and Aerospace Engineering, University of California, Irvine, CA}
\address[UW]{Department of Aeronautics and Astronautics, University of Washington, Seattle, WA 98195}
\address[GAT]{School of Aerospace Engineering, Georgia Institute of Technology, Atlanta, GA}	

\begin{abstract}
In the present paper we study the power output and efficiency of overdamped stochastic thermodynamic engines that are in contact with a heat bath having a temperature that varies periodically with time. This is in contrast to most of the existing literature that considers the Carnot paradigm of alternating contact with heat baths having different fixed temperatures, hot and cold.
Specifically, we consider a periodic and bounded but otherwise arbitrary temperature profile and derive explicit bounds on the power and efficiency achievable by a suitable controlling potential that couples the thermodynamic engine to the external world. Standing assumptions in our analysis are bounds on the norm of the gradient of effective potentials -- in the absence of any such constraint, the physically questionable conclusion of arbitrarily large power can be drawn. 
%
 \end{abstract}


\end{frontmatter}

\section{Introduction}\label{sec:intro}
The classical model of a thermodynamic engine is based on an ensemble
that is brought in contact with two heat baths of different temperatures, hot ($T_h$) and cold ($T_c$), in periodic succession. Based on such a model, Carnot in his foundational treatise \cite{carnot1986reflexions,callen1998thermodynamics} established a limit on the efficiency in transforming heat into mechanical work, that was later on expressed by Thomson (Lord Kelvin) in terms of the ratio of absolute temperatures of the two heat baths, namely $\eta=1-\frac{T_c}{T_h}$, which is known as Carnot efficiency. A salient property in Carnot's analysis was the reversibility of thermodynamic transitions \cite{adkins1983equilibrium}, which necessitated infinitely slow operation, thereby delivering zero power. For more than one hundred years, classical thermodynamics \cite{elwell1972classical,waldram1985theory} did not succeed in addressing questions that pertain to the maximal power that can be delivered during finite-time transitions.

The quest to model far-from-equilbrium thermodynamic transitions and to quantify the maximal power output of thermodynamic engines ultimately led to the development of fluctuation theorems and stochastic thermodynamics \cite{lebon2008understanding,seifert2012stochastic,sekimoto2010stochastic,brockett2017thermodynamics,seifert2008stochastic,schmiedl2007optimal}.
This emerging framework allows studying thermodynamic transitions at the level of a single particle or of small ensembles, and has been applied to the study of biological molecular machines and nano-scale engineering devices. Early work concerning power and efficiency of such microscopic thermodynamic systems focused on Carnot-like heat engines that operate between two thermal heat baths of constant temperature, both in the overdamped \cite{schmiedl2007efficiency} and low friction \cite{dechant2017underdamped} regimes. Recently, this earlier work has been extended to account for continuous and periodic temperature profiles in the linear response regime \cite{bauer2016optimal,brandner2015thermodynamics,Bradner2020geom,fu2020harvesting,frim2022optimal}, the standing assumption being that of small perturbations and linearized dynamics.
Under a  low-friction assumption, explicit bounds on maximal power have also been obtained in the nonlinear regime \cite{miangolarra2021underdamped}.

The aim of this paper is to explore the operation of thermodynamic systems that remain in contact with a single thermal bath of time-varying and periodic temperature profile. 
The work we present below focuses on the fully nonlinear regime, removing the low friction assumption to study overdamped systems. It generalizes our previous work \cite{FU2021109366} in which we studied a Carnot-like heat engine in contact with two heat baths with prespecified hot and cold temperatures. Thus, in the present, we develop formulae for achievable power and efficiency when the thermodynamic engine operates in contact with a heat bath of periodically varying temperature.

The paper is organized as follows. Section  \ref{sec:problem} provides background on stochastic thermodynamics. Section \ref{sec:geometricrepresentationofpower} explains
the conceptualization of a cyclic operation on a thermodynamic manifold, metrized by the Wasserstein metric, and gives geometric expressions for power and dissipation, along with some illustrative examples. 
Section \ref{sec:4} contains the main results on achievable power for  an arbitrary periodic temperature profiles of the thermal bath, under suitable constraints on the Fisher information of the thermodynamic states and the gradient of the controlling potential. 
Section \ref{sec:conclusions} recaps the main ideas and discusses future research.

\section{Background on Stochastic Energetics}
\label{sec:problem}

A basic thermodynamic model that helps quantify the energy exchange between an ensemble of particles in contact with a thermal heat bath and an applied external force is that of overdamped Langevin dynamics
\begin{equation}
	\label{eq:overdamped}
	\gamma \ud  X_t = -\nabla_x U(t,X_t) \ud t + \sqrt{2\gamma k_BT(t)} \ud B_t.
\end{equation}
Here, the stochastic process $X_t\in \mR^n$ represents the position of a single particle with $n$ the dimension of the ambient space,
$\gamma$ the dissipation coefficient, $U(t,x)$ an externally controlled time-varying potential exerting $\nabla_x U(t,X_t)$ force,  $T(t)$ the temperature of a thermal heat bath effected by the stochastic excitation of an $n$-dimensional standard Brownian motion $B_t$, and $k_B$ the Boltzmann constant. Such a model is widely used to model motion of colloidal particles in an ambient heat bath \cite{seifert2012stochastic,peliti2021stochastic}.

The state of the thermodynamic ensemble is identified with  the probability density of the stochastic process $X_t$, denoted by $\rho(t,x)$, and is governed by the Fokker-Planck equation
\begin{equation}	\label{eq: Fokker-Planck}
\frac{\partial \rho}{\partial t} 
= \frac{1}{\gamma} \nabla_x \cdot(\rho (\nabla_x  U(t,x) + k_B T(t) \nabla_x  \log \rho(t,x))).
\end{equation}  
Interestingly, by defining an effective velocity field
\begin{equation}\label{eq:v-def}
v(t,x)= - \frac{1}{\gamma}\left[\nabla_x  U(t,x) + k_B T(t) \nabla_x  \log \rho(t,x)\right],
\end{equation}  
the Fokker-Planck equation takes the form of a  continuity equation $\frac{\partial \rho}{\partial t} +\nabla_x \cdot(\rho v)=0$.

Variations in the potential energy $U(t,X_t)$ of the particles mediate transference of heat and work between the particles and both the heat bath and the externally controlled potential $U(t,\cdot)$. Specifically, the rates of heat and work that are being transferred to an individual particle by the time-varying potential and the heat bath are
\begin{subequations}
\begin{align}\label{eq:heat}
\dbar Q &= \nabla_x U(t,x) \circ \ud X_t,\\
\dbar W &= \frac{\partial U}{\partial t}(t,X_t)\ud t, \label{eq:work}
\end{align}
\end{subequations}
where $\circ$ denotes Stratonovich integration \cite{sekimoto2010stochastic}. The definitions of heat and work ensure validity of the first law of thermodynamics at the level of individual particle, $\ud U(t,X_t)= \dbar W + \dbar Q$. Throughout, $\dbar$ denotes that integration depends on the path (i.e., non-perfect differential, as opposed to $d$). Collectively, at the level of the ensemble, the rates of work and heat infused into the system are
\begin{subequations}
	\begin{align}\label{eq:heat2}
	\dbar \mathcal Q &=  \left[\int_{\mR^n} \langle\nabla_x U(t,x), v(t,X_t) \rangle \rho(t,x) \ud x\right]\ud t,\\\label{eq:work2}
	\dbar \mathcal W &= \left[\int_{\mR^n} \frac{\partial U}{\partial t} (t,x)\rho(t,x) \ud x \right]\ud t,
	\end{align}
\end{subequations}
where $\langle v_1,v_2\rangle=v_1'v_2$ denotes the standard inner product in $\mathbb R^n$,
leading to the first law at the level of the ensemble, $\ud \mathcal E(\rho,U) = 	\dbar \mathcal Q + \dbar \mathcal W $, where
\begin{equation}
\mathcal E(\rho,U) = \int_{\mR^n} U(x)\rho(t, x) \ud x
\end{equation}
is the internal energy of the ensemble system. 

In light of the above, the problem we want to tackle is that of determining the maximal achievable power during a thermodynamic cycle for a temperature profile with period $t_f$. Averaging the work output over a cycle, we obtain that the power is
\begin{equation}
\mathcal P = -\frac{1}{t_f}\int _0^{t_f} \int_{\mR^n} \frac{\partial U}{\partial t}(t,x) \rho(t, x)\ud x \ud t. 
\end{equation}
When the engine operates under periodic driving potential and temperature profile of the same period, the density 
settles to 
a closed periodic orbit,
that is, 
$\rho(0,\cdot) = \rho(t_f,\cdot)$. In general, the potential and temperature profiles can be discontinuous in time, but need to abide by the periodic boundary condition, e.g., $U(0^+,\cdot) = U(t_f^+,\cdot)$ and the same for the temperature. {Throughout we assume that $U(t,x)$ is differentiable in $x$, while both $T(t)$ and  $U(t,x)$ are piecewise differentiable in $t$.}

%

Two additional important thermodynamic quantities that are implicated in the transference of energy along a cycle are the entropy $\mathcal S$ and free energy $\mathcal F$, defined as \cite{owen2012first,parrondo2015thermodynamics}
\begin{subequations}
\begin{align}\label{eq:entropydefination}
\mathcal S (\rho)&=  -k_B\int_{\mR^n} \log(\rho)\,\rho\,\ud x,\\
\label{eq:fenergy}
\mathcal F(\rho,U,T) &= \mathcal E(\rho,U)  - T\mathcal S(\rho).
\end{align}
\end{subequations}
Their relevance will become evident in the sequel.

\section{Cyclic operation: power and dissipation}
\label{sec:geometricrepresentationofpower}

As is evident from the earlier discussion, the state of a thermodynamic ensemble is the probability distribution $\rho(t,\cdot)$ itself, and the cyclic operation under the effect of the heat bath and the driving potential, amounts to traversing a closed orbit
\[
\{\rho(t,\cdot) \mid t\in[0,t_f)\}
\]
over the space of distributions. It is this closed orbit that we seek to optimize, to maximize power output, under the control input that consists of the driving potential 
\[
\{U(t,\cdot) \mid t\in[0,t_f)\}.
\]

A serendipitous connection between thermodynamics and the so-called Wasserstein geometry of probability distributions was discovered by
Aurell etal.\
\cite{aurell2011optimal} (see also \cite{chen2019stochastic}), in that the dissipation along a path of the thermodynamic state can be expressed as the traversed geometric length.
To this end, we briefly describe the basic elements~\cite{villani2003topics}
that are essential for our exposition.

\subsection{Thermodynamic/Wasserstein space $\mathcal P_2(\mR^n)$}
The space of probability distributions on $\mR^n$ with finite second-order moments, denoted by $\mathcal P_2(\mR^n)$, or $\mathcal P_2$ here for short, assumes a Riemannian-like structure. Our interest in this space, as noted, is due to the fact that it serves as state space for our thermodynamic system \eqref{eq: Fokker-Planck} and, in addition, it is normed in a way that the length of trajectories equals the dissipation generated during the corresponding thermodynamic transition of ensembles. Thus,  $\mathcal P_2$ is a natural choice.

The structure of $\mathcal P_2$ is inherited by the so-called Wasserstein metric\footnote{This metric is also known as Monge-Kantorovich, or earth mover's distance. The particular version that we use is based on quadratic transportation cost, noted in the subscript $W_2$.}
\begin{equation}\nonumber
	W_2(\rho_0,\rho_f):=\sqrt{\inf_{\pi \in  \Pi(\rho_0,\rho_f)} \int_{\mR^n\times \mR^n} \|x-y\|^2 \pi(x,y) \ud x \ud y} ,
\end{equation}
for $\rho_0, \rho_f \in \mathcal P_2$. 
The optimization in the above expression is over probability distributions $\pi$ on the product space $\mR^n \times \mR^n$ having $\rho_0$, $\rho_f$ as marginals; $\Pi(\rho_0,\rho_f)$ denotes the set of distributions on the product space with this property, i.e., having the specified marginals.

A tangent displacement $\delta\rho$ about a given density $\rho(\cdot)$ can be identified with a vector field $v(\cdot)$ that effects the infinitesimal perturbation via the continuity equation $\delta\rho=-\left(\nabla_x\cdot(\rho v)\right)\delta t$. It turns out that $v$ can be chosen uniquely by the requirement that it is curl-free, and thus the gradient of suitable (unique modulo a constant) potential, i.e., $v(x)=\nabla_x\phi(x)$ for a scalar function
$\phi$, see \cite[Section 8.1.2, p.\ 246-247]{villani2003topics}.
As a consequence, we can
formally identify  $\delta\rho/\delta t$, $\phi$, $v$, as alternative representations of tangent directions linked bijectively in pairs (modulo a constant in the choice of $\phi$) via the Poisson equation $\delta\rho/\delta t+ \nabla_x\cdot (\rho\nabla_x \phi)=0$ and the curl-free $v=\nabla_x \phi$ requirement.

The tangent space of $\mathcal P_2$ admits the inner product
\[
\langle v_1,v_2\rangle_\rho :=\int_{\mR^n}(\nabla_x\phi_1(x))^\prime \nabla_x\phi_2(x)\rho(x)dx,
\]
that proves to have elegant geometric properties and intrinsic physical significance\footnote{Here, for $i\in\{1,2\}$, $v_i$ and $\phi_i$ correspond via the Poisson equation as explained}.
Firstly, it induces a Riemannian\footnote{As $\mathcal P_2$ can be thought to contain measures, it is often referred to as almost-Riemannian due to the fact that 
vector fields about singular points/measures cannot effect flow in all directions.} structure with norm
\[
\|\frac{\partial \rho}{\partial t}\|^2_\W  :=\int_{\mR^n}\|\nabla_x\phi(x)\|^2\rho(x)dx,
\]
that can be interpreted as (twice the) kinetic energy
of the ensemble (mass$\times$velocity$^2$). Secondly, the minimal length
\[ 
\ell_{\rho_{0:t_f}}:= 
 \int_0^{t_f} \|\frac{\partial \rho}{\partial t}\|_\W \ud t
\] 
of a path traversed by the thermodynamic ensemble between the end-point distributions $\rho_0$ and $\rho_{f}$, is precisely $W_2(\rho_0,\rho_f)$; that is, $W_2(\rho_0,\rho_f)$ is a geodesic distance~\cite[Ch.\ 8]{villani2003topics}. Thirdly, the action integral 
\begin{equation}\label{eq:action}
	\mathcal A_{\rho_{0:t_f}}:= \int_0^{t_f} \|\frac{\partial \rho}{\partial t}\|^2_\W \ud t
\end{equation}
of kinetic energy along any 
thermodynamic transition $\{\rho(t,\cdot); t \in [0,t_f]\}$ turns out to quantify precisely dissipation (entropy production) along the path. Finally, a useful relationship that follows from the Cauchy-Schwartz inequality is that
\begin{equation}
	\mathcal A_{\rho_{0:t_f}} \geq \frac{1}{t_f} \ell_{\rho_{0:t_f}}^2,
\end{equation}
with the bound achieved when the velocity remains constant along the  path ($\|\frac{\partial \rho}{\partial t}\|_\W =\text{constant}$).

\subsection{Energetics over a cycle}
We now consider the overdamped model \eqref{eq:overdamped} and assume that the temperature $T(t)$ is a periodic function of time, independent of the state of the system, with period $t_f$. As before, it is seen to represent the dynamics of any single particle of an ensemble whose distribution $\rho(t,\cdot)$ obeys the Fokker-Planck equation \eqref{eq: Fokker-Planck}. Thence, $\rho(t,\cdot)$ is the state of the ensemble and traverses over a period of duration $t_f$ a closed orbit on the thermodynamic manifold $\mathcal P_2$.

The key in quantifying energy exchange between the system and the environment is the free energy \eqref{eq:fenergy}, which is a function $\mathcal F(\rho(t,\cdot), U(t,\cdot), T(t))$ of the ensemble state,  the potential and the temperature. Since all of the entries are periodic with the same period $t_f$, the change $\Delta \mathcal F$ in the free energy of the system over a cycle is zero, i.e.,
\[
\underbrace{\mathcal F(\rho(t_f,\cdot), U(t_f,\cdot), T(t_f))-\mathcal F(\rho(0,\cdot), U(0,\cdot), T(0))}_{\Delta \mathcal F} = 0,
\]
and hence, we write that $\Delta \mathcal F=
\int_0^{t_f}\dot {\mathcal F}dt=0$.
On the other hand, expanding the rate of change $\dot{\mathcal F}=d\mathcal F/dt$ along the cycle allows separating the contributions of heat and work that come in and out of the ensemble during that time period.
To this, we consider
	\begin{align}
		\frac{d{\mathcal{F}}}{d{t}}=
		&\int\left[
		\frac{\partial{U}}{\partial{t}}\rho+U\frac{\partial{\rho}}{\partial{t}}
		\right]dx+k_BT(t)\int \frac{ \partial{ \rho }}{\partial{t}} \log \rho dx-\dot{T}(t) \mathcal{S}(\rho )\label{eq:first} \\
		=&\int \frac{\partial{U}}{\partial{t}}\rho dx
		+\int \left[\left(U\!+\!k_BT(t)\log \rho \right)\frac{\partial{\rho }}{\partial{t}} \right]dx
		-\dot{T}(t) \mathcal{S}(\rho ),\label{eq:second}
	\end{align}
where for the first equality we used the fact that $\frac{\partial{\rho }}{\partial{t}}$ integrates to zero; spatial integrals from here on are understood as being over $\mR^n$ unless made explicit otherwise. Following \cite{aurell2011optimal} (see also \cite{chen2019stochastic}), the second term in \eqref{eq:second} can be rewritten as
	\begin{align*}
	\hspace*{-1.2cm}&\int \left[\left(U+k_BT(t)\log \rho \right)\frac{\partial{\rho }}{\partial{t}} \right]dx
		=-\int \left(U+k_BT(t)\log  \rho \right)\nabla_x\cdot \left( \rho  v \right)\ud x\\
		=& \int \langle  \nabla_x U + k_B T(t)\nabla_x \log \rho ,   v \rangle   \rho  \;\ud x 
		=-\gamma\int \|v \|^2  \rho  \;\ud x= -\gamma   \|\frac{\partial  \rho }{\partial t}\|_{{\W }}^2,
	\end{align*}
	where the first equality utilizes the Fokker-Planck equation~\eqref{eq: Fokker-Planck}, the second equality follows using integration by parts\footnote{Under the assumption that the controlling potential $U(t,x)$ grows sufficiently fast for large $x$, the state $\rho(t,x)$ vanishes at infinity -- a condition that is needed in integrating by parts.}, the third equality utilizes \eqref{eq:v-def}, and the final equality is a re-write that uses the norm $\|\cdot\|_{{\W }}$ in the tangent space of $\mathcal P_2$.  
	
	From \eqref{eq:second} and \eqref{eq:work2}, the important equation 
	\begin{equation}\label{eq:freeenergycomputation}
	\frac{\ud \cF}{\ud t}\!=\frac{\dbar \mathcal W }{dt}
	-\gamma \|\frac{\partial  \rho }{\partial t}\|_{{\W }}^2-\dot{T}(t) \mathcal{S}( \rho )
	\end{equation}
	follows.
	In this we recognize three contributions. The first term on the right represents work delivered to the system as explained earlier, the second and third terms represent heat exchange with the environment. Of those, the integral along a cycle of the one that involves a quadratic expression of the velocity $\partial \rho/\partial t$, is always negative and vanishes for quasi-static ($t_f\to\infty$) operation. Thus, it represents dissipation, i.e., it represents heat being released to the environment that cannot be recovered in the reverse direction. The last term on the right represents again heat, but this time the flow is reversible with time and the integral over time is independent of the velocity of the ensemble, thus, representing quasi-static heat transference.

We are now in a position to give an expression for the power
\[
\mathcal P:=\frac{1}{t_f}\int_{\rm period}\dbar\mathcal W,
\]
delivered over a cycle.

\begin{proposition}\label{thm1}
	 The power output over a cycle  is
	\begin{align}
\label{eq:power-dissipation-explicit}
		\mathcal P&= -\frac{1}{t_f} \int_0^{t_f}\left[\gamma \|\frac{\partial \rho }{\partial t}\|_{{\W }}^2\right]dt - \frac{1}{t_f} \int_0^{t_f}\left[\dot{T}(t) \mathcal{S}(\rho )\right]\ud t	.		\end{align}														
\end{proposition}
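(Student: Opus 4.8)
The plan is to integrate the free-energy balance \eqref{eq:freeenergycomputation} over one full period and exploit the vanishing of the net change in free energy. Since the identity
\[
\frac{\ud \cF}{\ud t}=\frac{\dbar \mathcal W }{dt}-\gamma \|\frac{\partial  \rho }{\partial t}\|_{\W}^2-\dot{T}(t) \mathcal{S}(\rho)
\]
has already been established pointwise in time along the cycle, I would integrate both sides from $0$ to $t_f$. The left-hand side collapses to $\int_0^{t_f}\dot{\cF}\,\ud t=\Delta\cF$, which was shown earlier to be zero: because $\rho(t,\cdot)$, $U(t,\cdot)$ and $T(t)$ all return to their initial values after one period, the free energy $\cF(\rho,U,T)$ is the same at $t_f$ as at $0$.

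Setting the integrated left-hand side to zero immediately gives
\[
\int_0^{t_f}\frac{\dbar \mathcal W}{dt}\,\ud t=\int_0^{t_f}\gamma\|\frac{\partial  \rho }{\partial t}\|_{\W}^2\,\ud t+\int_0^{t_f}\dot{T}(t)\mathcal{S}(\rho)\,\ud t,
\]
that is, the total work infused over the cycle equals the integrated dissipation plus the integrated reversible-heat contribution. Recalling from the background that power output is the negative of the net work delivered to the ensemble, $\mathcal P=-\frac{1}{t_f}\int_0^{t_f}\int_{\mR^n}\frac{\partial U}{\partial t}\rho\,\ud x\,\ud t=-\frac{1}{t_f}\int_{\rm period}\dbar\mathcal W$, I would divide through by $t_f$ and negate, which reproduces \eqref{eq:power-dissipation-explicit} exactly. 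This portion is a pure rearrangement of an already-derived relation, so no new estimates are needed.

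The only genuine subtlety, and the step I would treat most carefully, is justifying $\Delta\cF=0$ under the standing assumption that $T(t)$ and $U(t,x)$ are merely \emph{piecewise} differentiable in $t$. At a jump time the integrand $\dot{\cF}$ is not classically defined, so rather than integrate naively I would partition $[0,t_f]$ into the finitely many subintervals on which $T$ and $U$ are smooth, apply \eqref{eq:freeenergycomputation} on each, and sum. The increments of $\cF$ across consecutive smooth pieces, together with the jump contributions at the switching instants, telescope around the closed orbit, and the periodic boundary conditions $\rho(0,\cdot)=\rho(t_f,\cdot)$, $U(0^+,\cdot)=U(t_f^+,\cdot)$ (and likewise for $T$) force the accumulated total to vanish. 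Once this telescoping is in place, the remainder of the argument is the elementary rearrangement described above.
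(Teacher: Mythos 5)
Your argument is correct and is essentially the paper's own proof: integrate the free-energy balance \eqref{eq:freeenergycomputation} over one period, invoke $\Delta\mathcal F=0$ from periodicity, and rearrange to obtain \eqref{eq:power-dissipation-explicit}. Your extra care about the piecewise differentiability of $T$ and $U$ (partitioning into smooth subintervals and telescoping) is a sound refinement that the paper leaves implicit, but it does not change the route.
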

\begin{proof}
In view of \eqref{eq:freeenergycomputation} and the fact that $\Delta\mathcal F=0$ over a cycle, we obtain that work output over a cycle is
	\begin{align*}
	-\int_0^{t_f}\frac{\dbar \mathcal W }{dt}dt &=-\gamma\int_{0}^{t_f}\|\frac{\partial  \rho }{\partial t}\|^2_{{\W }} \ud t -\int_0^{t_f}\dot{T}(t)\mathcal{S}( \rho ) \ud t
	\end{align*}	
	which concludes~\eqref{eq:power-dissipation-explicit}.
	\end{proof}
	
	As noted, the first term is always negative and reprents work that is being dissipated and lost as heat to the environment, this is
	\[
	 \mathcal W_{\text{diss}}:=\gamma\int_{0}^{t_f}\|\frac{\partial  \rho }{\partial t}\|^2_{{\W }} \ud t.
	\]
The average of $-\dot T\mathcal S$ over a period can be both, positive or negative, depending on the control protocol $U$ and relates to the ``useful'' portion of the work that is being extracted (when positive). It is independent of the speed of traversing the cycle, and thereby we refer to it as quasi-static, 
\begin{equation}\label{eq:defWqs}
 \mathcal W_{\text{qs}}:=-\int_0^{t_f}\dot{T}(t)\mathcal{S}( \rho ) \ud t.
\end{equation}
	It can also be expressed in geometric terms as
	\begin{align}
\label{eq:qs}
		\mathcal W_{\rm qs}&=-k_B\int_0^{t_f}T(t)\int  \lr{\nabla_x \log( \rho )}{ v }  \rho \, \ud x \, dt,
		\end{align}										
where we have integrated \eqref{eq:defWqs} by parts and utilized
	\begin{align*}
	\frac{\ud}{\ud t} \cS( \rho )&
=-k_B\int \lr{\nabla_x \log( \rho )}{ v }  \rho  \ud x.
	\end{align*}

To recap, the power delivered over a cycle is
		\begin{equation}
	\mathcal P = \frac{1}{t_f}(\mathcal W_{\rm qs}- \mathcal W_{\text{diss}}).
	\end{equation}
The problem to maximize power by a suitable choice of regulating potential $U(t,\cdot)$, reduces to selecting a closed curve $\{\rho(t,\cdot)\mid t\in[0,t_f]\}$
in the thermodynamic manifold $\mathcal P_2$, such that
\begin{align}\label{eq:max-powerp2}
\mathcal P^\star&:=\max_{\rho(t,\cdot)}~ -\frac{1}{t_f} \int_0^{t_f}\left[\gamma \|\frac{\partial  \rho }{\partial t}\|_{{\W }}^2 +\dot{T}(t) \mathcal{S}( \rho )\right]\ud t,\\\nonumber
&\text{s.t.}\quad \rho(0,\cdot) = \rho(t_f,\cdot).
\end{align}

\begin{remark} \label{eq:physicalinterpretation}
The quasi-static work over a cycle can be written as an area integral as long as the state of the system has a finite dimensional parametrization \cite{miangolarra2022geometry}. On the other hand, the dissipation
 $\mathcal W_{\text{diss}}$
 over a cycle can achieve the lower bound 
  \begin{equation}
  \mathcal W_\text{diss} \geq \frac{\gamma}{t_f} \ell_{\rho_{0:t_f}}^2,
  \end{equation}  
 when the velocity $\|\frac{\partial  \rho }{\partial t }\|_{{\W }}$ is constant  along the curve that the system traverses on $\mathcal P_2$. Thus, assuming constant velocity, the maximal power is
  \begin{equation}
 \mathcal P = \frac{1}{t_f}\left(\mathcal W_\text{qs}- \frac{\gamma}{t_f} \ell_{\rho_{0:t_f}}^2\right).
  \end{equation} 
  In the quasi-static limit, as $t_f \to \infty$, the contribution from  dissipation converges to zero.  $\Box$ \end{remark}
 
  The traditional definition of efficiency, where the work generated is compared to the heat drawn out of the heat bath of highest temperature, does not apply in our present case of a single heat bath of piecewise continuous temperature profile; heat is drawn out of the heat bath at different temperature as this continuous to fluctuates over the period.  The discussion in Remark \ref{eq:physicalinterpretation} motivates the following definition for the efficiency of a thermodynamic cycle in the present context: 
\begin{equation} \label{eq:eff}
\eta:= \frac{\mathcal W_\text{qs} -  \mathcal W_\text{diss} }{\mathcal W_\text{qs} }.
\end{equation}
It is readily seen that the efficiency is always $\eta\leq1$, with equality being achieved in quasi-static limit, when the dissipation is zero.

\subsection{Illustrative examples} 
We next discuss two special cases for which the expression for power can be made explicit. This will not only prove useful later on, but also shed some light into the problem of maximizing power. 

\subsubsection{Carnot-like cycle}
Consider the one-dimensional system \eqref{eq:overdamped} of overdamped particles and assume that the temperature is piecewise constant. That is, we consider Carnot-like operating conditions where the system is brought in contact, alternatingly, with two heat baths  of different temperatures, in which case,
\begin{align}
\label{eq:Carnot}
T(t) = \begin{cases}
T_h,\quad t \in(0,t_\hlf)\\
T_c, \quad t \in (t_\hlf,t_f),
\end{cases}
\end{align}
over a period $t_f$, with $t_\hlf$ to be determined.
The power delivered becomes
\begin{align*} \mathcal P=-\frac{1}{t_f}\left(\gamma \int_{0}^{t_f} \|\frac{\partial  \rho }{\partial t}\|^2_{{\W }} \ud t-\Delta T(\mathcal{S}(\rho_\hlf)-\mathcal{S}(\rho_0)) \right),
\end{align*}
where $\Delta T:=T_h-T_c$, and the thermodynamic state is $\rho_0(\cdot)=\rho(t_0,\cdot)$ and $\rho_\hlf(\cdot)=\rho(t_\hlf,\cdot)$,  at times $t=0$ and $t=t_\hlf$, respectively.  Maximizing power over a choice of control potential $U(t,\cdot)$ and $t_\hlf$, gives  $t_\hlf=t_f/2$ and
\begin{align}
\mathcal P^\star=-\frac{\gamma}{t_\hlf(t_f-t_\hlf )}W_2(\rho_0,\rho_\hlf)^2+\frac{1}{t_f}\Delta T \Delta \cS,\label{eq:wrong?}
\end{align}
where $\Delta \cS=\cS(\rho_\hlf)-\cS(\rho_0)$, see~\cite{FU2021109366} for more details.

 Assuming that the control potential $U(t,x)$ proves sufficient to localize the
thermodynamic state at some point in time, e.g., at $t=0$ where $\rho(0,\cdot)$ can be set to be close to a Dirac delta, then $\cS(\rho_0) \approx -\infty$.
As a consequence, $\Delta \cS \approx \infty$, limitless power can be drawn as $\mathcal P$ in \eqref{eq:wrong?} is not bounded from above. This phenomenon is not particular to the Carnot cycle and can be traced to unreasonable demands on $\nabla U$ to bring the thermodynamic state to a very low entropy condition. Below we highlight that the same is true for a quadratic potential when the thermodynamic state remains Gaussian.

\subsubsection{Gaussian states:}
We consider once again the dynamics in \eqref{eq:overdamped} of overdamped particles subject now to a quadratic controlling potential, and further specialize to one degree of freedom, i.e., $n=1$ and $x\in\mR$. We assume that the thermodynamic state $\rho(t,\cdot)$  of the ensemble is Gaussian with mean zero and variance $\sigma(t)^2$, i.e., $\rho(t,\cdot)=N(0,\sigma(t)^2)$.

If $q(t)$ denotes the ``spring constant'' of the potential, i.e., $U(t,x)=\frac{1}{2}q(t)x^2$, $\sigma(t)$ is governed by the Lyapunov equation 
\begin{align}\label{eq:lyap-gaussian}
\gamma \frac{\ud}{\ud t}(\sigma(t)^2) = -2q(t)\sigma(t)^2 + 2 k_BT(t).
\end{align}
The effective velocity field is
\begin{equation*}
v(t,x) = - \frac{1}{\gamma}(q(t)x  - k_BT(t)\frac{x}{\sigma(t)^2}  ) = \frac{\dot{\sigma}(t)}{\sigma(t)}x,
\end{equation*}
where the last identity follows from Lyapunov equation. Hence, 
\begin{equation*}
\|\frac{\partial \rho}{\partial t}\|_\W^2 =  \frac{1}{\gamma^2}\left(q(t)\sigma(t) - \frac{k_BT(t)}{\sigma(t)}\right)^2 = \gamma  \dot{\sigma}(t)^2.
\end{equation*}
In addition, the entropy of the Gaussian distribution is
\begin{equation}
\cS(\rho) = \frac{k_B}{2} \log(2\pi e \sigma(t)^2).
\end{equation}

Using these identities the expression for the power delivered~\eqref{eq:power-dissipation-explicit} simplifies to
  \begin{align}\label{eq:power-Gaussian}
\mathcal P=-\frac{1}{t_f}
\int_{0}^{t_f}
\left[
\gamma\dot{\sigma}(t)^2+k_B\dot{T}(t)\log(\sigma(t)) \right]\ud t.
\end{align}
It is now intuitively clear that, as long as the temperature does not remain constant (and thus, there is nontrivial temperature gradient), one can apply a similar strategy as the one utilized in the Carnot-like cycle and extract unbounded power from this system. Specifically, with a suitable control protocol $\sigma(t)$ can be made arbitrarily small at the time when the temperature of the heat bath is at its lowest, thereby allowing unbounded power to be extracted as the heat bath moves to higher temperatures.
For completeness, we detail such a strategy in the appendix (see Section \ref{app:1}).

It is apparent that without any restrictions on the allowable control $U(t,x)$, infinite power can be drawn from a temperature varying heat bath. 
The reason is that the quasi-static instantaneous power $k_B\dot T \log(\sigma)$ can increase without bound for a choice of control $U$ that drives the ensemble to a low entropy state, e.g., close to a Dirac. 
While this is highly desirable, it is physically unreasonable. Large gradients of $U(t,x)$ that are needed to localize the thermodynamic state amount to excessive forces being applied to the particles of the ensemble.
Thus, on physical grounds it is reasonable to impose suitable constraints on the gradients of $U$ or $\rho$, along any thermodynamic transition. This is done next.

\section{Maximizing power under constraints}\label{sec:4}

It is insightful to revisit the Fokker Planck equation \eqref{eq: Fokker-Planck}, written as the continuity equation with a vector field $v=\nabla_x \phi$,
\[
\frac{\partial \rho}{\partial t} +\nabla_x \cdot(\rho v)=0
\]
where the ``effective'' potential $\phi$ is composed of two terms $\phi= \frac{1}{\gamma}(R-U)$, with $R$ the probabilistic potential
\[
R(t,x)= - k_BT(t)\log(\rho(t,x)).
\]

The gradient of $U$ represents a physical force that drives the ensemble.
On the other hand, large values for the gradient of $R$ (seen as some sort of  entropic force) also seem physically unrealistic in the context of overdamped (colloidal) particles, where diffusion dominates inertial effects. Below we proceed by postulating and imposing suitable quadratic bounds, and explore the consequences with regard to maximizing power.

Specifically, we postulate that the control mechanism that generates $U(t,x)$, tasked to steer the thermodynamic system along a closed orbit, is restricted in its ability to generate forces. Similarly, that it is restricted in its ability to localize the state to approximate a Dirac. Either of these reasonable constraints can be cast as bounds on the size of the gradients, such as \begin{align}\label{eq:bound1}
&\int_{\mR^n} \|\nabla_x R\|^2\rho dx, \mbox{ or }
\int_{\mR^n} \|\nabla_x U\|^2\rho dx. 
\end{align}
We note that the first of the two square-norms relates directly to the Fisher information of the thermodynamic state
\begin{align}  
I( \rho ):=\int_{\mR^n} \|\nabla_x \log( \rho )\|^2 \rho  \ud x,
\end{align}
since
$
(k_BT(t))^2 I(\rho)=\int 
 \|\nabla_x R\|^2\rho dx
$. Therefore, bounding the $L_2$ norm of $\nabla_x R$ is equivalent to bounding the Fisher information (as long as the temperature is finite and nonzero)
and it is the latter that we will impose in the sequel.
Thus, in the next two subsections, we develop optimal protocols for generating power in thermodynamic systems under suitable bounds.
In passing, we also recall that the quadratic expression
\[
\int_{\mR^n} \|\nabla_x R-\nabla_x U\|^2\rho dx = \gamma   \|\frac{\partial  \rho }{\partial t}\|_{{\W }}^2,
\]
quantifies dissipation, as seen earlier.

\subsection{Maximal power with bounded Fisher information}
\label{sec:constraintonfisher}
We determine an expression for the maximal power that can be extracted under the assumption that
\begin{align} \label{eq:fisherconstraint}
	I( \rho ) \leq I_{\max},
\end{align}
or, equivalently, a corresponding bound on $\int\|\nabla_xR\|^2\rho dx$.
As before, we consider the optimization problem \eqref{eq:max-powerp2} for power generated by the over-damped  model~\eqref{eq:overdamped} over a cycle. In what follows,
\begin{align*}
\bar{T}&:=\frac{1}{t_f}\int_0^{t_f}T(t) \ud t,\\
\text{Var}(T(t))&:=\frac{1}{ t_f} \!\!\int_0^{t_f} \left(T(t)\!-\!\bar{T}\right)^2\ud t,
\end{align*}
denote the mean and variance of the temperature profile.
 
\begin{proposition}
	Under the constraint~\eqref{eq:fisherconstraint} on thermodynamic paths $\rho(t,x)$ over a closed cycle,
	 the maximal power expressed in \eqref{eq:max-powerp2} satisfies	\begin{align}\label{eq:maxboundsgeneral}
		\mathcal{P}^\star\leq \frac{k_B^2I_{\max}}{4\gamma} \text{Var}(T(t)).
	\end{align}
\end{proposition}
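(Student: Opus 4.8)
The plan is to bound the power functional \eqref{eq:power-dissipation-explicit} from above by keeping \emph{both} contributions and exploiting the fact that the dissipation term and the quasi-static term are coupled through the Fisher-information constraint via the common quantity $u(t):=\|\frac{\partial \rho}{\partial t}\|_{\W}$. First I would write $\mathcal{P}=\frac{1}{t_f}(\mathcal{W}_{\text{qs}}-\mathcal{W}_{\text{diss}})$ and use periodicity to center the temperature: since $\int_0^{t_f}\frac{\ud}{\ud t}\mathcal{S}(\rho)\,\ud t=0$ over a closed cycle, the geometric form \eqref{eq:qs}, equivalently $\mathcal{W}_{\text{qs}}=\int_0^{t_f}T(t)\frac{\ud}{\ud t}\mathcal{S}(\rho)\,\ud t$, is unchanged upon replacing $T(t)$ by $T(t)-\bar{T}$.

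Next I would estimate the entropy production rate using Cauchy--Schwarz. Starting from
\[
\frac{\ud}{\ud t}\mathcal{S}(\rho) = -k_B\int\langle\nabla_x\log(\rho), v\rangle\rho\,\ud x,
\]
I obtain $\left|\frac{\ud}{\ud t}\mathcal{S}(\rho)\right|\leq k_B\sqrt{I(\rho)}\,u(t)\leq k_B\sqrt{I_{\max}}\,u(t)$, where I use $\int\|v\|^2\rho\,\ud x=\|\frac{\partial\rho}{\partial t}\|_{\W}^2=u(t)^2$ (valid because the Fokker--Planck velocity $v=\nabla_x\phi$ is curl-free) together with the definition of the Fisher information and the constraint \eqref{eq:fisherconstraint}. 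This is the single step at which the bound $I(\rho)\leq I_{\max}$ is actually invoked.

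Combining these two reductions yields
\[
\mathcal{P}\,t_f \leq \int_0^{t_f}\left[k_B\sqrt{I_{\max}}\,|T(t)-\bar{T}|\,u(t)-\gamma u(t)^2\right]\ud t.
\]
The integrand is, at each instant, a concave quadratic in the nonnegative scalar $u(t)$, so I would maximize it pointwise: the maximizer is $u^\star(t)=\frac{k_B\sqrt{I_{\max}}\,|T(t)-\bar{T}|}{2\gamma}$, attaining the value $\frac{k_B^2 I_{\max}(T(t)-\bar{T})^2}{4\gamma}$. Integrating over the period and dividing by $t_f$ produces exactly $\frac{k_B^2 I_{\max}}{4\gamma}\operatorname{Var}(T(t))$, which is the claimed bound; note that since this step only upper-bounds over all admissible speeds, achievability is not needed here.

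I expect the main conceptual point to be the recognition that one should \emph{not} simply discard the dissipation penalty $-\gamma u^2$ (which would give only the weaker and speed-independent estimate $\mathcal{P}\leq\frac{1}{t_f}\mathcal{W}_{\text{qs}}$), but instead retain it and balance it against the quasi-static gain. The Fisher bound is precisely what ties the two together, since both the entropy rate and the dissipation are governed by $u(t)$. The only technical care needed is the Cauchy--Schwarz estimate and the identification $\int\|v\|^2\rho\,\ud x=u(t)^2$ for the curl-free velocity; the concluding pointwise optimization is elementary and is what produces the characteristic factor $1/(4\gamma)$ and the temperature variance.
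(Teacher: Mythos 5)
Your proof is correct and is essentially the paper's argument in a slightly reorganized form: the paper centers the temperature by adding $C\int_0^{t_f}\dot{\mathcal S}(\rho)\,\ud t=0$ with the optimal choice $C=\bar T$, completes the square on the vector fields as $\bigl\|\sqrt{\gamma}\,v+\tfrac{k_B(T(t)-C)}{2\sqrt{\gamma}}\nabla_x\log\rho\bigr\|^2$, and only then invokes $I(\rho)\le I_{\max}$, whereas you apply Cauchy--Schwarz to collapse the spatial integral first and then complete the square pointwise in the scalar speed $u(t)$. The two computations are equivalent (your Cauchy--Schwarz plus scalar quadratic maximization is exactly the paper's dropped square term) and yield the identical constant $\tfrac{k_B^2 I_{\max}}{4\gamma}\operatorname{Var}(T(t))$.
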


\begin{proof}
Over one cycle, the change of entropy is
\begin{align}\label{eq:rateentropy}
\int_0^{t_f} \dot{\mathcal{S}}( \rho )\ud t=\mathcal{S}(\rho_{f})-\mathcal{S}(\rho_{0})
=0
\end{align}
due to the periodic conditions.
Multiplying \eqref{eq:rateentropy} by a constant $C$ and adding to the expression~\eqref{eq:power-dissipation-explicit} for power  yields
\begin{align*}
\mathcal P&= -\frac{1}{t_f} \int_0^{t_f}\int \left[\gamma \|v \|^2 + k_B(T(t)-C) v \cdot \nabla_x \log( \rho ) \right] \rho  \ud x\ud t\\
	& = -\frac{1}{t_f} \int_0^{t_f}\int \left[\sqrt{\gamma} v  + \frac{1}{2\sqrt{\gamma}}k_B(T(t)-C) \nabla_x \log( \rho ) \right]^2 \rho  \ud x\ud t
	+\frac{k_B^2}{4\gamma t_f} \int_0^{t_f} \left(T(t)-C\right)^2I( \rho )\ud t	\\
		&\leq \frac{k_B^2I_{\max}}{4\gamma t_f} \int_0^{t_f} \left(T(t)-C\right)^2\ud t,
\end{align*}
where we use the positivity of the first term and $I( \rho )\leq I_{\text{max}}$.
The best bound over all constants $C$ is obtained by letting $C=\bar{T}$  concluding the result.
\end{proof}
\begin{remark}
	The upper bound of the power extracted from one complete cycle under the Fisher constraint  is proportional to the average fluctuations in the temperature profile \eqref{eq:fisherconstraint}. For the Carnot-like temperature profile \eqref{eq:Carnot}, the maximal power satisfies
		\begin{align*}
			\mathcal{P}^\star \leq \frac{k_B^2I_{\text{max}}}{4\gamma} \frac{(T_h-T_c)^2}{4},
			\end{align*}
			where $T_h:=\max_t\{T(t)\}$ and
$T_c:=\min_t\{T(t)\}$ are
 the maximal and minimal temperatures over the cycle,
		which is consistent with the result in \cite{FU2021109366} that deals with piece-wise constant temperature profile and fast adiabatic transitions (Carnot-like). $\Box$
	\end{remark}
	
In the following, we show that the above bound is tight by providing a protocol that achieves  the  upper bound \eqref{eq:maxboundsgeneral} as $t_f\to 0$, that is, in the limit of fast driving \cite{Blaber2021rapid,schmiedl2007efficiency}.
To this end, we
consider 
\[
\rho(t,x)=N(0, \sigma(t)^2),
\]
once again specializing to $n=1$.
The Fisher information is $I(\rho)=\sigma(t)^{-2}$. We
 select the following variance 
	\begin{align} \label{eq:optimalsigmafisher}
		\sigma(t) = \sigma_{\text{min}}\exp\left(\frac{\kappa}{2\gamma} \left(\int_{t_0}^{t}(T(s)-\bar{T})\ud s\right)\right),
	\end{align}
	where $\kappa:=k_B/\sigma_{\rm min}^2$, $\sigma_{\rm min}:=\min_t \sigma(t)$,
and where $t_0$ is selected so that
\begin{equation}
\label{eq:definationt0}
\int_{t_0}^t \left(T(s)-\bar{T}\right) \ud s \geq 0,
\end{equation}
for all $t$. The profile \eqref{eq:optimalsigmafisher}  can be achieved by the quadratic control protocol
$U(t,x)=\frac{1}{2}q(t)x^2$, where
	\begin{align}\label{eq:q0fishercon}
		q(t)
	&=\frac{\kappa \bar T}{2} +\frac{\kappa T(t)}{2} \left(
	2 e^{-\frac{\kappa}{\gamma}(\int_{t_0}^{t}T(s)-\bar{T}\ud s)}-1
	\right),
	\end{align}
	together with $\sigma(t)^2$, satisfy the Lyapunov equation \eqref{eq:lyap-gaussian}. Under these conditions we have the following:


\begin{proposition} 
	\label{thm:max-power-overdampedfisherin}
Under the control protocol	\eqref{eq:q0fishercon},
	\begin{align*}
		\mathcal P \to \frac{k_B^2I_{\text{max}}}{4\gamma}\text{Var}(T(\cdot)), \mbox{ as }	 t_f\to 0.	\end{align*}
\end{proposition}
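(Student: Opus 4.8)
The plan is to evaluate the Gaussian power formula \eqref{eq:power-Gaussian} directly on the prescribed protocol \eqref{eq:optimalsigmafisher}, splitting it into its quasi-static and dissipative pieces, and to show that the former already equals $\frac{k_B^2 I_{\max}}{2\gamma}\text{Var}(T(\cdot))$ for every period, while the latter tends to $-\frac{k_B^2 I_{\max}}{4\gamma}\text{Var}(T(\cdot))$ as $t_f\to 0$, so that the two combine to the claimed limit. I would record the essential observations first: since $I(\rho)=\sigma^{-2}$ is largest where $\sigma$ is smallest, and \eqref{eq:definationt0} forces the exponent in \eqref{eq:optimalsigmafisher} to vanish at $t=t_0$, one has $\sigma_{\min}=\sigma(t_0)$ and $I_{\max}=\sigma_{\min}^{-2}$, whence $\kappa=k_B/\sigma_{\min}^2=k_B I_{\max}$. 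Differentiating \eqref{eq:optimalsigmafisher} then gives the clean identities $\frac{d}{dt}\log\sigma(t)=\frac{\kappa}{2\gamma}\big(T(t)-\bar{T}\big)$ and $\dot\sigma(t)=\frac{\kappa}{2\gamma}\big(T(t)-\bar{T}\big)\sigma(t)$, which drive all subsequent computations.

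For the quasi-static term I would integrate by parts in $-\frac{k_B}{t_f}\int_0^{t_f}\dot T\log\sigma\,dt$. The boundary term drops because both $T$ and $\sigma$ are $t_f$-periodic (the periodicity of $\sigma$ being exactly the statement $\int_0^{t_f}(T-\bar{T})dt=0$), leaving $\frac{k_B}{t_f}\int_0^{t_f}T\,\frac{d}{dt}\log\sigma\,dt=\frac{k_B\kappa}{2\gamma t_f}\int_0^{t_f}T(t)\big(T(t)-\bar{T}\big)dt$. Since $\int_0^{t_f}T(T-\bar{T})dt=\int_0^{t_f}(T-\bar{T})^2 dt=t_f\,\text{Var}(T(\cdot))$, this contribution equals $\frac{k_B\kappa}{2\gamma}\text{Var}(T(\cdot))=\frac{k_B^2 I_{\max}}{2\gamma}\text{Var}(T(\cdot))$, which is notably independent of $t_f$.

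For the dissipative term I would substitute $\dot\sigma^2=\frac{\kappa^2}{4\gamma^2}(T-\bar{T})^2\sigma^2$ into $-\frac{\gamma}{t_f}\int_0^{t_f}\dot\sigma^2\,dt$, obtaining $-\frac{\kappa^2}{4\gamma}\cdot\frac{1}{t_f}\int_0^{t_f}\sigma(t)^2\big(T(t)-\bar{T}\big)^2 dt$. The remaining task is to pass to the limit $t_f\to 0$ in this time-average. Because $T$ is bounded, the exponent $\frac{\kappa}{\gamma}\int_{t_0}^t(T-\bar{T})ds$ is $O(t_f)$ uniformly in $t$, so $\sigma(t)^2\to\sigma_{\min}^2$ uniformly; hence $\frac{1}{t_f}\int_0^{t_f}\sigma^2(T-\bar{T})^2 dt\to\sigma_{\min}^2\,\text{Var}(T(\cdot))$, and the dissipative contribution tends to $-\frac{\kappa^2\sigma_{\min}^2}{4\gamma}\text{Var}(T(\cdot))=-\frac{k_B^2 I_{\max}}{4\gamma}\text{Var}(T(\cdot))$. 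Adding the two pieces yields $\frac{k_B^2 I_{\max}}{4\gamma}\text{Var}(T(\cdot))$, matching the upper bound \eqref{eq:maxboundsgeneral} and proving its tightness.

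The step I expect to be the main obstacle is this limiting argument in the dissipative term, together with making precise the sense in which $\bar{T}$ and $\text{Var}(T(\cdot))$ are held fixed as $t_f\to 0$. The cleanest way is to regard the temperature as a time-rescaling of a fixed profile, so that its mean and variance are period-independent; then the uniform bound $|\sigma(t)^2-\sigma_{\min}^2|\le\sigma_{\min}^2\big(e^{\varepsilon(t_f)}-1\big)$ with $\varepsilon(t_f)\to0$ controls the error in the time-average by $\sigma_{\min}^2\big(e^{\varepsilon(t_f)}-1\big)\text{Var}(T(\cdot))\to0$. One should also verify that the quadratic $q(t)$ of \eqref{eq:q0fishercon} indeed reproduces \eqref{eq:optimalsigmafisher} through the Lyapunov equation \eqref{eq:lyap-gaussian}, so that the protocol is admissible and saturates $I(\rho)\le I_{\max}$ exactly at $t=t_0$; this is a direct substitution, but it is what licenses identifying $I_{\max}$ with $\sigma_{\min}^{-2}$.
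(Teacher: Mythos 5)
Your proposal is correct and follows essentially the same route as the paper: both substitute the protocol into the Gaussian power formula, use $\tfrac{d}{dt}\log\sigma=\tfrac{\kappa}{2\gamma}(T-\bar T)$ and integration by parts on the quasi-static term, and control the factor $\sigma(t)^2/\sigma_{\min}^2=e^{2r(t)}$ by noting the exponent is $O(t_f)$. The only cosmetic difference is that you compute the two-sided limit of the dissipative term directly, whereas the paper establishes a lower bound proportional to $(2-e^{2r_{\max}})$ and squeezes it against the upper bound of the preceding proposition.
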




\begin{proof}
	We first note that $\sigma(t)$ is periodic and satisfies the constraint $I(\rho)\leq I_{\rm max}$, since $\sigma(t)\geq \sigma_{\rm min}$ following the definition of $t_0$.
The expression for power in
 \eqref{eq:power-Gaussian} gives
	\begin{align}
		\label{eq:powerequivalent}
		\mathcal{P}=-\frac{1}{t_f}\int_{0}^{t_f}\left[\gamma\sigma_{\text{min}}^2\dot{r}(t)^2e^{2r(t)}+k_B\dot{T}(t)r(t)\right]\ud t
		=-\frac{k_B }{t_f}\int_{0}^{t_f}\left[\frac{\gamma \sigma_{{\text{min}}}^2}{k_B}\dot{r}(t)^2e^{2r(t)}-T(t)\dot{r}(t)\right]\ud t,
	\end{align}
	for $r$ representing the logarithmic ratio
	\[
	r(t):=\log(\frac{\sigma(t)}{\sigma_{\text{min}}})=  \frac{\kappa}{2\gamma } \int_{t_0}^t\left(T(s)-\bar{T}\right) ds.
	\]
	The equality in \eqref{eq:powerequivalent} follows using integration by parts and the periodic boundary conditions. 
	Then,
	\begin{align} 		
	\mathcal{P} \nonumber
		&\geq -\frac{k_B }{t_f}\int_{0}^{t_f}\left[\frac{\gamma \sigma_{{\text{min}}}^2}{k_B}\dot{r}(t)^2e^{2r_{\text{max}}}-T(t)\dot{r}(t)\right]\ud t\\
		&=\frac{k_B^2}{4\gamma t_f \sigma_{\text{min}}^2}(2-e^{2r_{\text{max}}})\int_{0}^{t_f}\left(T(t)-\bar{T}\right)^2\ud t=(2-e^{2r_\text{max}})\frac{k_B^2 }{4 \gamma \sigma_{\text{min}}^2}\text{Var}(T(t)),\label{eq:lowerbound}
	\end{align}
for
	\begin{align*}
		0\leq r_\text{max}:=\max_t r(t) \leq \frac{\kappa}{2\gamma } \int_{t_0}^t (T(s)-\bar{T}) \ud s\leq \frac{\kappa t_f}{2\gamma }(T_h-T_c).
	\end{align*}
	Thus, as $t_f \to 0$, $e^{2r_\text{max}} \searrow 1$  and the lower bound in \eqref{eq:lowerbound} approaches the upper bound in 
	\eqref{eq:maxboundsgeneral}, which completes the proof.
\end{proof}

To get insight about the tradeoff between efficiency and power, we evaluate the efficiency of the proposed protocol~\eqref{eq:q0fishercon}. Using the expression~\eqref{eq:optimalsigmafisher} for the variance in the definition for quasi-static work~\eqref{eq:defWqs} yields   
	\begin{align*}
	\mathcal W_\text{qs}&=\int_{0}^{t_f} T(t) \dot {\mathcal{S}}( \rho )\ud t
		=
		k_B\int_{0}^{t_f} T(t)\frac{\dot{\sigma}(t)}{\sigma(t)} \ud t=\frac{k_B^2}{2\gamma \sigma_{\text{min}}^2}\int_{0}^{t_f}\!T(t)\left(T(t)-\bar{T}\right)\! \ud t
		=\frac{k_B^2t_f}{2\gamma  \sigma_{\text{min}}^2}\text{Var}(T(t))
	\end{align*}
while \[\mathcal W_\text{qs}-\mathcal{W}_{\text{diss}} =t_f \mathcal P \to   \frac{k_B^2t_f}{4\gamma  \sigma_{\text{min}}^2}\text{Var}(T(t)) \] as $t_f \to 0$ according to the Proposition~\ref{thm:max-power-overdampedfisherin}.  Therefore, the efficiency 
\begin{align*}
    \eta
		&=\frac{\mathcal W_\text{qs}-\mathcal{W}_{\text{diss}}}{\mathcal W_\text{qs}} \to \frac{1}{2}
\end{align*}
as $t_f \to 0$. This is consistent with the observation made in underdamped limit where it is shown that the efficiency at maximum power is $\frac{1}{2}$.

\subsection{Maximal power with an $L_2$-bound on $\nabla_x U(t,x)$}
\label{sec:constraintonpotential}
We now determine an expression for the maximal power that can be extracted under the assumption that
\begin{equation} \label{eq:potentialcon}
	\frac{1}{\gamma} \int \|\nabla_x U(t,x)\|^2  \rho (x) dx \leq M.
\end{equation}
As before, we consider the optimization problem  \eqref{eq:max-powerp2}
and derive the bound on power below.

\begin{proposition}\label{prop:fisher}
	Under the constraint \eqref{eq:potentialcon}, the maximal power in  \eqref{eq:max-powerp2} is bounded as follows,
	\begin{align}\label{eq:P-upper-bound-U}
		\mathcal{P}^\star\leq  \frac{M}{4}\frac{1}{t_f} \int_0^{t_f}\frac{T_h-T(t)}{T(t)}\ud t.
	\end{align}
\end{proposition}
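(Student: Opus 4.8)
The plan is to follow the template of the preceding (Fisher-information) proposition, but to re-express the integrand so that the quantity surviving the completion of squares is $\int\|\nabla_x U\|^2\rho\,dx$ rather than $I(\rho)$. First I would collect the power \eqref{eq:power-dissipation-explicit} into the single integral
\[
\mathcal P = -\frac{1}{t_f}\int_0^{t_f}\!\!\int\left[\gamma\|v\|^2 + k_B T(t)\langle\nabla_x\log\rho, v\rangle\right]\rho\,dx\,dt,
\]
using the geometric form \eqref{eq:qs} of $\mathcal W_{\rm qs}$ together with $\gamma\|\partial\rho/\partial t\|_{\W}^2 = \gamma\int\|v\|^2\rho\,dx$. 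Because the entropy change over a cycle vanishes, \eqref{eq:rateentropy}, I may add $C\int_0^{t_f}\dot{\mathcal S}(\rho)\,dt = 0$ for an arbitrary constant $C$; exactly as in the previous proof this replaces $T(t)$ by $T(t)-C$ in the cross term.

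The new ingredient is to eliminate $\nabla_x\log\rho$ in favour of $\nabla_x U$. From the definition \eqref{eq:v-def} of the effective velocity one has $\nabla_x U = -(\gamma v + k_B T(t)\nabla_x\log\rho)$, so $k_B\nabla_x\log\rho = -\tfrac{1}{T}(\nabla_x U + \gamma v)$. Substituting this into the bracketed integrand and simplifying turns it into a quadratic form in $v$ in which $\nabla_x U$ appears as a parameter,
\[
\frac{C\gamma}{T}\|v\|^2 - \frac{T-C}{T}\langle\nabla_x U, v\rangle = \frac{C\gamma}{T}\left\|v - \frac{T-C}{2C\gamma}\nabla_x U\right\|^2 - \frac{(T-C)^2}{4CT\gamma}\|\nabla_x U\|^2 .
\]
This is a pointwise algebraic identity, and for $C>0$ the first term is manifestly nonnegative; since it enters $\mathcal P$ with an overall minus sign it only lowers the power and may be discarded to produce an upper bound. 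The key observation is that the surviving term depends on the protocol only through $\int\|\nabla_x U\|^2\rho\,dx$, which is precisely the quantity constrained in \eqref{eq:potentialcon}.

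Dropping the nonnegative square and invoking $\int\|\nabla_x U\|^2\rho\,dx\le\gamma M$ then gives
\[
\mathcal P \le \frac{M}{4C}\,\frac{1}{t_f}\int_0^{t_f}\frac{(T(t)-C)^2}{T(t)}\,dt
\]
for every $C>0$. I would finish by choosing $C = T_h := \max_t T(t)$, which makes the integrand $\frac{(T_h-T)^2}{T_h T}$, and then applying the elementary estimate $\frac{(T_h-T)^2}{T_h T}\le\frac{T_h-T}{T}$, valid because $0\le (T_h-T)/T_h\le 1$ whenever $0\le T(t)\le T_h$; this yields exactly \eqref{eq:P-upper-bound-U}. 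The step needing the most care is the second one: one must check that, after eliminating $\nabla_x\log\rho$, the only protocol-dependence lying outside the discarded square is through $\int\|\nabla_x U\|^2\rho\,dx$, so that the constraint applies cleanly despite $\nabla_x U$ itself depending on $v$. The choice $C=T_h$ and the final inequality introduce some slack (optimizing over $C$ would sharpen the bound), but they are what produce the clean Carnot-like factor $(T_h-T)/T$.
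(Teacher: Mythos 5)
Your proof is correct, but it follows a genuinely different route from the paper's. The paper eliminates $v$ in favour of $\nabla_x U$ and $\nabla_x\log\rho$, writes $\dot{\mathcal S}(\rho)=\frac{k_B}{\gamma}\int[\nabla_x U\cdot\nabla_x\log\rho+k_BT\|\nabla_x\log\rho\|^2]\rho\,dx$, applies Cauchy--Schwarz together with the constraint, and minimizes the resulting quadratic in $\sqrt{I(\rho)}$ to get the pointwise-in-time bound $\dot{\mathcal S}(\rho)\geq-\frac{M}{4T(t)}$; it then drops the dissipation term separately and uses the add-zero trick $-T_h\int_0^{t_f}\dot{\mathcal S}\,dt=0$ to make the sign of the multiplier of $\dot{\mathcal S}$ negative. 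You instead eliminate $\nabla_x\log\rho$ via \eqref{eq:v-def}, complete the square in the velocity field $v$ itself, and absorb the dissipation into the discarded nonnegative square; the add-zero trick appears in your argument as the free constant $C$ in the cross term, playing the same role as $T_h$ in the paper. Your algebra checks out (the identity $\gamma\|v\|^2-\frac{T-C}{T}\langle\nabla_xU+\gamma v,v\rangle=\frac{C\gamma}{T}\|v\|^2-\frac{T-C}{T}\langle\nabla_xU,v\rangle$ and the subsequent completion of the square are correct, and the surviving term is controlled exactly by \eqref{eq:potentialcon}), and the final steps $C=T_h$ and $\frac{(T_h-T)^2}{T_hT}\leq\frac{T_h-T}{T}$ are valid since $0<T(t)\leq T_h$. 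What your approach buys is a one-parameter family of bounds $\mathcal P^\star\leq\frac{M}{4C}\frac{1}{t_f}\int_0^{t_f}\frac{(T(t)-C)^2}{T(t)}\,dt$ valid for every $C>0$; at $C=T_h$ this is already \emph{sharper} than \eqref{eq:P-upper-bound-U} before your last relaxation, and optimizing over $C$ would sharpen it further --- which is relevant given the paper's own remark that tightness of \eqref{eq:P-upper-bound-U} is unclear. What the paper's route buys is the intermediate inequality $\dot{\mathcal S}(\rho)\geq-\frac{M}{4T(t)}$, a statement of independent physical interest that your argument does not produce.
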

\begin{proof}
	Over one cycle, the rate of change of the entropy is 
	\begin{align*}
	\dot{S}( \rho )  &= -k_B \int v \cdot \nabla_x \log  \rho   \rho  \ud x= \frac{k_B}{\gamma} \int \left[\nabla_x U \cdot \nabla_x \log ( \rho ) + k_B T(t) \|\nabla_x \log( \rho )\|^2\right] \rho  \ud x\\
		&\geq -\frac{k_B}{\gamma}\left(\int \|\nabla_x U\|^2 \rho  \ud x\right)^{\frac{1}{2}}\left(\int \|\nabla_x \log ( \rho )\|^2 \rho  \ud x\right)^{\frac{1}{2}}
		+ \frac{k_B^2}{\gamma}T(t) \int \|\nabla_x \log ( \rho )\|^2 \rho  \ud x\\
		&\geq \frac{k_B}{\gamma} \left[-\sqrt{\gamma M} \sqrt{I( \rho )}  + k_B T(t) I( \rho )\right]\geq -\frac{M}{4T(t)},
	\end{align*} 
	where the first inequality is by Cauchy-Schwartz, the second inequality is due to the constraint~\eqref{eq:potentialcon}, and the last inequality is obtained by minimizing over $I( \rho )$. 
Thus, 
	\begin{align*}
		\mathcal P= \frac{1}{t_f} \int_0^{t_f}\left[-\gamma \|\frac{\partial  \rho }{\partial t}\|_{W_2}^2 +T(t) \dot{S}( \rho )\right]\ud t\leq \frac{1}{t_f} \int_0^{t_f}T(t) \dot{S}( \rho )\ud t,
	\end{align*}  
where we neglected the dissipation term and used integration by parts and cyclic boundary conditions. In order to apply the bound $\dot{S}( \rho )\geq  -\frac{M}{4T(t)}$, we need to ensure $\dot{S}( \rho )$ is multiplied by a negative factor. This is achieved using the periodic boundary condition and subtracting the zero term $-T_h \int_0^{t_f} \dot{S}( \rho )\ud t=0$. Therefore,
\begin{align} \label{eq: maximalpowerboundspotential}
	\mathcal{P}^\star \leq \frac{1}{t_f} \int_0^{t_f}(T(t)-T_h) \dot{S}( \rho )\ud t\leq \frac{M}{4}\frac{1}{t_f} \int_0^{t_f}\frac{T_h-T(t)}{T(t)}\ud t.
\end{align}  
	\end{proof}
\begin{remark}
For a Carnot-like piece-wise constant temperature profile,
\begin{align*} 
	\mathcal{P}^\star \leq \frac{M}{4}\frac{1}{t_f} \left(\int_0^{t_\hlf}\frac{T_h-T_h}{T_h}\ud t+\int_{t_\hlf}^{t_f}\frac{T_h-T_c}{T_c}\ud t\right)=\frac{M}{8}(\frac{T_h}{T_c}-1),
\end{align*}  
 	which is consistent with the result in \cite[Theorem 2]{FU2021109366} that were derived for this special case. 
	 $\Box$ \end{remark}

\begin{remark}
    Unlike the case with constraint on the Fisher information, it is not clear whether the proposed bound is tight. 
\end{remark}

\section{Conclusions}
\label{sec:conclusions}
The present work quantifies the maximal power and the efficiency at maximal power that can be drawn out from a thermodynamic engine in contact with a single heat bath with arbitrary periodic temperature profile. The analysis is carried out in the framework of stochastic thermodynamics \cite{sekimoto2010stochastic}. We motivate two constraints on the control actuation, one on the Fisher information of the thermodynamic states and the other on a quadratic norm of the gradient of the controlling potential. In each case, we obtain insightful bounds for the maximal power and efficiency at maximal power achievable.
Specifically, for the case where the Fisher information is contrained, we show that the maximal power is nearly fully determined by the variance of the temperature profile.  Of particular interest is that the efficiency  at maximal power approaches $\frac{1}{2}$ when the period tends to zero.   An important direction for future work pertains to connections between this observation and the universal bound on efficiency at maximal power output \cite{van2005thermodynamic} being one half of the Carnot efficiency.

\section{Appendix}
\subsection*{Limitless power under arbitrary protocol.}\label{app:1}

We herein explain that,
as long as the temperature does not remain constant
and there are no restrictions on the control potential in \eqref{eq:overdamped},
the power that can be extracted through the ensemble is arbitrarily large. 

	Since $T(t)$ varies over the cycle, one of the following two cases must hold:
	\begin{enumerate}
		\item[i)] There exists an interval $(a,b) \subset [0,t_f]$ such that $\dot{T}(t)>0$ for $t\in (a,b)$.
		\item[ii)] The derivative $\dot T(t)$ is never positive and there exists $t_\hlf  \in [0,t_f]$ where $T(t)$ is discontinuous, with $T(t_\hlf ^-)<T(t_\hlf ^+)$.
		
	\end{enumerate} 
	In either case, the mechanism for extracting arbitrarily large power is similar. It takes advantage of a localized thermodynamic state (with entropy $\simeq -\infty$) at a point of the cycle when the temperature starts increasing.
	
		Let us first consider case i). Let $ \rho $ be Gaussian $N(0,\sigma(t)^2)$ and
		choose $\delta>0$ such that  $(a+\delta,b-\delta) \subset (a,b)$. Then, let 
	$\sigma(t)$ be  according to
	\begin{align*}
		\sigma(t) = \begin{cases}
	\sigma_{\text{max}},\quad t \in(0,a] \cup   (b,t_f]\\
			k_1t+b_1 ,\quad t \in(a,a+\delta]\\
			\sigma_{\text{min}}, \quad t \in (a+\delta,b-\delta]\\
			k_2 t+b_2, \quad t \in (b-\delta,b]
		\end{cases}
	\end{align*}
	where $k_1<0$, $k_2>0$, $b_1$, and $b_2$ are constants such that $\sigma(t)$ is continuous. 
	From the above, $|k_1|,k_2<\sigma_{\rm max}/\delta$.
	The dissipation term in the expression for power \eqref{eq:power-Gaussian} satisfies 
	\begin{align*}
		-\frac{1}{t_f}\int_{0}^{t_f}\gamma \dot{\sigma}(t)^2 dt 
		&>-2\gamma \frac{\sigma_{\text{max}}^2}{t_f\delta}.
	\end{align*}
	The second term of \eqref{eq:power-Gaussian} decomposes into four parts, following $\sigma(t)$, where
	\begin{eqnarray*}
			\hspace{100pt}\int_{(0,a]\cup (b, t_f]}
			k_B \dot{T}(t)\log(\sigma(t)) dt
			&=& 
			k_B(T(a) - T(b)) \log(\sigma_{\text{max}})\\
		  \int_{(a,a+\delta]}
		  k_B \dot{T}(t)\log(\sigma(t)) dt
		  &\leq&   
		  k_B(T(a +\delta) -T(a)) \log(\sigma_{\text{max}})\\
		\int_{(b-\delta,b]}
		k_B \dot{T}(t)\log(\sigma(t)) \ud t&\leq&
		k_B(T(b) -T(b-\delta)) \log(\sigma_{\text{max}}) \\
		\int_{(a+\delta,b-\delta)}
		k_B \dot{T}(t)\log(\sigma(t)) dt&=& k_B(T(b-\delta)-T(a+\delta)) \log(\sigma_{\text{min}})
	\end{eqnarray*}
	where the fact that $\dot{T}(t)>0$ is used for the inequalities.
	Combining the above, we have
	\begin{align}\label{eq:power-bound}
		\mathcal P & \geq -2\gamma\frac{\sigma^2_{\text{max}}}{t_f\delta}-\frac{k_B}{t_f}\Delta T_1 \log(\frac{\sigma_{\text{min}}}{\sigma_{\text{max}}})
		,
	\end{align}
	where $\Delta T_1:=T(b-\delta) - T(a+\delta) >0$.
	  As $\sigma_{{\text{min}}}\to 0$, the lower bound for the power tends to $\infty$.

	For the second case, select $\sigma(t)$ as before where the interval $(a,b)$ and $\delta$ are chosen such that the point of discontinuity $t_{\hlf} \in (a+\delta,b-\delta) \subset (a,b)$ and $\Delta T_2:=T(b)-T(a)>0$. As a result
		\begin{eqnarray*}
			\hspace{100pt}	\int_{(0,a]\cup (b, t_f]}
			k_B \dot{T}(t)\log(\sigma(t)) dt
			&=& 
			k_B(T(a) - T(b)) \log(\sigma_{\text{max}})\\
		  \int_{(a,a+\delta]}
		  k_B \dot{T}(t)\log(\sigma(t)) dt
		  &\leq&   
		  k_B(T(a \!+\!\delta) -T(a)) \log(\sigma_{\text{min}})\\
		\int_{(b-\delta,b]}
		k_B \dot{T}(t)\log(\sigma(t)) \ud t&\leq&
		k_B(T(b) - T(b-\delta)) \log(\sigma_{\text{min}}) \\
		\int_{(a+\delta,b-\delta)}
		k_B \dot{T}(t)\log(\sigma(t)) dt&=& k_B(T(b-\delta)-T(a+\delta)) \log(\sigma_{\text{min}})
	\end{eqnarray*}
	where the fact that $\dot{T}(t)<0$ is used for the inequalities. Combining the terms concludes a bound similar to~\eqref{eq:power-bound}, with $\Delta T_1$ replaced by $\Delta T_2$, which grows to $\infty$ as  $\sigma_{{\text{min}}}\to 0$.

\section*{Acknowledgments}
The research was supported in part by the NSF under grants 1807664, 1839441, 1901599, 1942523, and the AFOSR under FA9550-17-1-0435.

\end{document}